\newtheorem{lemma}{Lemma}
\begin{document}
%
\title{A Stronger Soft-Covering Lemma and Applications}

\author{
\IEEEauthorblockN{Paul Cuff}
\IEEEauthorblockA{Princeton University}
}

\maketitle

\begin{abstract}
Wyner's soft-covering lemma is a valuable tool for achievability proofs of information theoretic security, resolvability, channel synthesis, and source coding.  The result herein sharpens the claim of soft-covering by moving away from an expected value analysis.  Instead, a random codebook is shown to achieve the soft-covering phenomenon with high probability.  The probability of failure is doubly-exponentially small in the block-length, enabling more powerful applications through the union bound.
\end{abstract}

\section{Claim}

Given a channel $Q_{V|U}$ and an input distribution $Q_U$, let the output distribution be $Q_V$.  Also, let the $n$-fold memoryless extensions of these be denoted $Q_{V^n|U^n}$, $Q_{U^n}$, and $Q_{V^n}$.

Wyner's soft-covering lemma \cite[Theorem~6.3]{wyner-common-info} says that the distribution induced by selecting a $U^n$ sequence at random from an appropriately chosen set and passing this sequence through the memoryless channel $Q_{V^n|U^n}$ will be a good approximation of $Q_{V^n}$ in the limit of large $n$ as long as the set is of size greater than $2^{nR}$ where $R > I(U;V)$.  In fact, the set can be chosen quite carelessly---by random codebook construction, drawing each sequence independently from the distribution $Q_{U^n}$.

The soft-covering lemmas in the literature use a distance metric on distributions (commonly total variation or relative entropy) and claim that the distance between the induced distribution $P_{V^n}$ and the desired distribution $Q_{V^n}$ vanishes in expectation over the random selection of the set.\footnote{Many of the theorems only claim existence of a good codebook, but all of the proofs use expected value to establish existence.}  In the literature, \cite{han-verdu} studies the fundamental limits of soft-covering as ``resolvability,'' \cite{hayashi06} provides rates of exponential convergence, \cite{cuff13} improves the exponents and extends the framework, \cite{ahlswede-winter02} and \cite[Chapter~16]{wilde-text} refer to soft-covering simply as ``covering'' in the quantum context, \cite{winter05} refers to it as a ``sampling lemma'' and points out that it holds for the stronger metric of relative entropy, and \cite{hou-kramer14} gives a recent direct proof of the relative entropy result.

Here we give a stronger claim.  With high probability with respect to the set construction, the distance will vanish exponentially quickly with the block-length $n$.  The negligible probability of the random set not producing this desired result is doubly-exponentially small.

Let us define precisely the induced distribution.  Let ${\cal C} = \{u^n(m)\}_{m=1}^M$ be the set of sequences, which will be referred to as the codebook.  The size of the codebook is $M = 2^{nR}$.  Then the induced distribution is:
\begin{align}
    P_{V^n|{\cal C}} &= 2^{-nR} \sum_{u^n(m) \in {\cal C}} Q_{V^n|U^n=u^n(m)}.
\end{align}

\begin{lemma}
    For any $Q_{U}$, $Q_{V|U}$, and $R > I(U;V)$, where $V$ has a finite support ${\cal V}$, there exists a $\gamma_1 > 0$ and a $\gamma_2 > 0$ such that for $n$ large enough
    \begin{align}
        {\mathbf P} \left( d(P_{V^n|{\cal C}}, Q_{V^n}) > e^{-\gamma_1 n} \right) & \leq e^{- e^{\gamma_2 n}},
    \end{align}
    where $d(\cdot,\cdot)$ is the relative entropy.
\end{lemma}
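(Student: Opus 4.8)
The plan is to control, for each $v^n$ with $Q_{V^n}(v^n)>0$, the likelihood ratio $Z_{\cal C}(v^n):=P_{V^n|{\cal C}}(v^n)/Q_{V^n}(v^n)=\frac{1}{M}\sum_{m=1}^M L\big(u^n(m),v^n\big)$, where $L(u^n,v^n):=\prod_{i=1}^nQ_{V|U}(v_i|u_i)/Q_V(v_i)=Q_{V^n|U^n}(v^n|u^n)/Q_{V^n}(v^n)$. Two observations organize everything: under the random codebook the $M$ summands are i.i.d.\ with mean $1$, so $\mathbf E[Z_{\cal C}(v^n)]=1$; and finiteness of ${\cal V}$ forces $L(u^n,v^n)\le\rho^n$ deterministically, with $\rho:=1/\min\{Q_V(a):Q_V(a)>0\}\ge1$. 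Since $P_{V^n|{\cal C}}\ll Q_{V^n}$ always, the relative entropy is finite and equals $\mathbf E_{V^n\sim Q_{V^n}}\!\big[Z_{\cal C}(V^n)\log Z_{\cal C}(V^n)\big]$; using $t\log t\le(t-1)+(t-1)^2$ one could try to bound this by the $\chi^2$-divergence $\mathbf E_{Q_{V^n}}[(Z_{\cal C}-1)^2]$, but $\mathbf E[L(U^n,v^n)^2]$ is exponentially large (governed by an order-$2$ R\'enyi information, not by $I(U;V)$), so a truncation is needed to operate at the threshold $R>I(U;V)$.

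Accordingly I would fix $\delta>0$ with $I(U;V)+\delta<R$, set $\tau_n:=2^{n(I(U;V)+\delta)}$, and write $Z_{\cal C}(v^n)=\tilde Z_{\cal C}(v^n)+H_{\cal C}(v^n)$, where $\tilde Z_{\cal C}$ keeps only the codewords with $L\le\tau_n$ and $H_{\cal C}\ge0$ is the tail. For the main term, $\tilde Z_{\cal C}(v^n)$ is an average of $M=2^{nR}$ i.i.d.\ variables in $[0,\tau_n]$ with mean $1-\mu(v^n)$ and variance at most $\mathbf E[L^2\,{\mathbf 1}[L\le\tau_n]]\le\tau_n$, where $\mu(v^n):=\mathbf E_{U^n}[L(U^n,v^n)\,{\mathbf 1}[L>\tau_n]]\in[0,1]$; Bernstein's inequality then gives
\[
\mathbf P\Big(\big|\tilde Z_{\cal C}(v^n)-(1-\mu(v^n))\big|>e^{-\gamma_1 n}\Big)\le 2\exp\!\big(-c\,2^{n(R-I(U;V)-\delta)}e^{-2\gamma_1 n}\big)
\]
for an absolute $c>0$, which is doubly-exponentially small once $\gamma_1$ is taken small enough that $2\gamma_1<(R-I(U;V)-\delta)\log 2$. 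Because there are at most $|{\cal V}|^n$ candidate sequences $v^n$, the union bound over them only multiplies the failure probability by a factor $e^{\Theta(n)}$ and is absorbed; hence with probability $1-\exp(-e^{\Theta(n)})$ one has $|Z_{\cal C}(v^n)-1|\le\mu(v^n)+e^{-\gamma_1 n}+H_{\cal C}(v^n)$ simultaneously for all $v^n$.

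The hard part is the tail $H_{\cal C}$: it is not uniformly small over $v^n$ — its worst case is only singly-exponentially small, so a union bound there would destroy the doubly-exponential rate — but only its $Q_{V^n}$-average is actually needed. A direct interchange of sums gives $\sum_{v^n}Q_{V^n}(v^n)H_{\cal C}(v^n)=T({\cal C}):=\frac{1}{M}\sum_m\epsilon(u^n(m))$ with $\epsilon(u^n):=Q_{V^n|U^n=u^n}\big(\{v^n:L(u^n,v^n)>\tau_n\}\big)\in[0,1]$, and likewise $\mathbf E_{Q_{V^n}}[\mu(V^n)]=\mathbf E_{\cal C}[T({\cal C})]=Q_{U^nV^n}\big(\{L(U^n,V^n)>\tau_n\}\big)$. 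Since $\tfrac{1}{n}\log L(U^n,V^n)\to I(U;V)$ in probability under the joint law $Q_{U^nV^n}$ while $I(U;V)+\delta>I(U;V)$, a Chernoff bound — legitimate because $\log L$ is bounded above, ${\cal V}$ being finite — yields $\mathbf E_{\cal C}[T({\cal C})]\le e^{-\alpha n}$ for some $\alpha>0$; and as $T({\cal C})$ is itself an average of $2^{nR}$ i.i.d.\ $[0,1]$-valued terms, Bernstein's inequality gives $T({\cal C})\le 2e^{-\alpha' n}$ with probability $1-\exp(-e^{\Theta(n)})$ for a suitable $\alpha'>0$. On the intersection of the two good events I would finish by splitting the entropy sum at ${\cal A}:=\{v^n:H_{\cal C}(v^n)\le1\}$: on ${\cal A}$ one has $Z_{\cal C}(v^n)<3$, hence $Z\log Z\le 3|Z-1|$ and $\sum_{v^n\in{\cal A}}Q_{V^n}(v^n)Z\log Z\le 3\big(\mathbf E_{Q_{V^n}}[\mu]+e^{-\gamma_1 n}+T({\cal C})\big)$; on ${\cal A}^c$ one has $Z_{\cal C}(v^n)\le 3H_{\cal C}(v^n)$, so the crude bound $\log Z\le n\log\rho$ gives $\sum_{v^n\notin{\cal A}}Q_{V^n}(v^n)Z\log Z\le 3n(\log\rho)\,T({\cal C})$. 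Summing, $d(P_{V^n|{\cal C}},Q_{V^n})\le\mathrm{poly}(n)\cdot e^{-\min(\gamma_1,\alpha')n}\le e^{-\gamma_1' n}$ for a slightly smaller $\gamma_1'>0$ and $n$ large, while the total failure probability is at most $\exp(-e^{\Theta(n)})\le e^{-e^{\gamma_2 n}}$ for a suitable $\gamma_2>0$. I expect this last step to be the real work: recognizing that, for relative entropy, the heavy part's effect is mediated by the single scalar $T({\cal C})$, which concentrates doubly-exponentially with no union bound, rather than by the pointwise values $H_{\cal C}(v^n)$, whose worst case over $v^n$ is only singly-exponentially controlled.
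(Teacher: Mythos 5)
Your proposal is correct and is essentially the paper's own argument: your truncation of the likelihood ratio at $\tau_n = 2^{n(I+\delta)}$ is exactly the paper's split of $P_{V^n|{\cal C}}$ into the jointly-typical part $P_{{\cal C},1}$ and atypical part $P_{{\cal C},2}$, your pointwise concentration of $\tilde Z_{\cal C}(v^n)$ with a union bound over the exponentially large ${\cal V}^n$ mirrors the paper's Chernoff bound on $D_{{\cal C},1}(v^n)$, and your scalar $T({\cal C})$ is precisely $\int d P_{{\cal C},2}$, controlled the same way (exponentially small mean via a Chernoff/R\'enyi exponent, then doubly-exponential concentration of an average of i.i.d.\ $[0,1]$ terms) and paired with the same crude $n \log \left( \max_{v} 1/Q_V(v) \right)$ bound on the heavy part. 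The only differences are cosmetic---Bernstein in place of Chernoff, two-sided concentration around $1-\mu(v^n)$ where the paper only needs an upper bound, and assembling the final divergence estimate by splitting at $\{H_{\cal C}\le 1\}$ rather than via the paper's Jensen/data-processing three-term decomposition---so nothing substantive changes.
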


\begin{proof}
    We state the proof in terms of arbitrary distributions (not necessarily discrete).  When needed, we will specialize to the case that ${\cal V}$ is finite.
    
    Let the Radon-Nikodym derivative between the induced and desired distributions be denoted as
    \begin{align}
        D_{\cal C}(v^n) &\triangleq \frac{d P_{V^n|{\cal C}}}{d Q_{V^n}}(v^n).
    \end{align}
    In the discrete case, this is just a ratio of probability mass functions.
    
    Notice that the relative entropy of interest, which is a function of the codebook ${\cal C}$, is given by
    \begin{align}
        d(P_{V^n|{\cal C}},Q_{V^n}) &= \int d P_{V^n|{\cal C}} \log D_{\cal C}.
    \end{align}
    
    Define the jointly-typical set over $u$ and $v$ sequences by
    \begin{align}
        {\cal A}_{\epsilon} &\triangleq \left\{ (u^n, v^n) : \frac{1}{n} \log \frac{d Q_{V^n|U^n=u^n}}{d Q_{V^n}} (v^n) \leq I_Q(U;V) + \epsilon \right\}.
    \end{align}
    
    We split $P_{V^n|{\cal C}}$ into two parts, making use of the indicator function denoted by $\mathbf{1}$.  Let $\epsilon>0$ be arbitrary, to be determined later.
    \begin{align}
        P_{{\cal C}, 1} &\triangleq 2^{-nR} \sum_{u^n(m) \in {\cal C}} Q_{V^n|U^n=u^n(m)} \mathbf{1}_{(V^n,u^n(m)) \in {\cal A}_{\epsilon}}, \\
        P_{{\cal C}, 2} &\triangleq 2^{-nR} \sum_{u^n(m) \in {\cal C}} Q_{V^n|U^n=u^n(m)} \mathbf{1}_{(V^n,u^n(m)) \notin {\cal A}_{\epsilon}}.
    \end{align}
    The measures $P_{{\cal C}, 1}$ and $P_{{\cal C}, 2}$ on the space ${\cal V}^n$ are not probability measures, but $P_{{\cal C}, 1} + P_{{\cal C}, 2} = P_{V^n|{\cal C}}$ for each codebook ${\cal C}$.
    
    Let us also split $D_{\cal C}$ into two parts:
    \begin{align}
        D_{{\cal C}, 1}(v^n) &\triangleq \frac{d P_{{\cal C}, 1}}{d Q_{V^n}}(v^n), \\
        D_{{\cal C}, 2}(v^n) &\triangleq \frac{d P_{{\cal C}, 2}}{d Q_{V^n}}(v^n).
    \end{align}
    
    By Jensen's inequality (or the data processing inequality) we can upper bound the relative entropy of interest:
    \begin{align}
        d(P_{V^n|{\cal C}},Q_{V^n}) &\leq h\left( \int d P_{{\cal C},1} \right) \ldots \nonumber \\
        & \quad + \int d P_{{\cal C},1} \log D_{{\cal C},1} + \int d P_{{\cal C},2} \log D_{{\cal C},2}, \label{expanded divergence bound}
    \end{align}
    where $h(\cdot)$ is the binary entropy function.
    
    Notice that $P_{{\cal C}, 1}$ will usually contain almost all of the probability.  That is, denoting the complement of ${\cal A}_{\epsilon}$ as $\overline{{\cal A}_{\epsilon}}$,
    \begin{align}
        \int d P_{{\cal C}, 2} &= 1 - \int d P_{{\cal C}, 1} \\
        &= 2^{-nR} \sum_{u^n(m) \in {\cal C}} \mathbf{P}_Q \left( \overline{{\cal A}_{\epsilon}} \; | \; U^n = u^n(m, {\cal C}) \right).
    \end{align}
    This is an average of exponentially many i.i.d. random variables bounded between 0 and 1.  Furthermore, the expected value of each one is the exponentially small probability of correlated sequences being atypical:
    \begin{align}
        \mathbf{E} \; \mathbf{P}_Q \left( \overline{{\cal A}_{\epsilon}} \; | \; U^n = u^n(m, {\cal C}) \right) &= \mathbf{P}_Q \left( \overline{{\cal A}_{\epsilon}} \right) \\
        &\leq 2^{-\beta n}, \label{atypical probability bound}
    \end{align}
    where
    \begin{align}
        \beta &= \max_{\alpha > 1} (\alpha - 1) \left( I_Q(U;V) + \epsilon - d_{\alpha}(Q_{U,V}, Q_U Q_V) \right),
    \end{align}
    where $d_{\alpha}(\cdot,\cdot)$ is the R\'{e}nyi divergence of order $\alpha$.  We use units of bits for mutual information and R\'{e}nyi divergence to coincide with the base two expression of rate.
    
    Therefore, the Chernoff bound assures that $\int d P_{{\cal C}, 2}$ is exponentially small.  That is, for any $\beta_1 < \beta$,
    \begin{align}
        {\mathbf P} \left( \int d P_{{\cal C}, 2} \geq 2 \cdot 2^{-\beta_1 n} \right) &\leq e^{-\frac{1}{3} 2^{n( R - \beta_1)}}.
    \end{align}
    
    Similarly, $D_{{\cal C}, 1}$ is an average of exponentially many i.i.d. and uniformly bounded functions, each one determined by one sequence in the codebook:
    \begin{align}
        D_{{\cal C}, 1}(v^n) &= 2^{-nR} \sum_{u^n(m) \in {\cal C}} \frac{d Q_{V^n|U^n=u^n(m)}}{d Q_{V^n}} (v^n) \mathbf{1}_{(v^n,u^n(m)) \in {\cal A}_{\epsilon}}
    \end{align}
    For every term in the average, the indicator function bounds the value to be between $0$ and $2^{nI(U;V) + n \epsilon}$.
    The expected value of each term with respect to the codebook is bounded above by one, which is observed by removing the indicator function.
    Therefore, the Chernoff bound assures that $D_{{\cal C},1}$ is exponentially close to one for every $v^n$.  For any $\beta_2$:
    \begin{align}
        {\mathbf P} \left( D_{{\cal C},1}(v^n) \geq 1 + 2^{-\beta_2 n} \right) &\leq e^{-\frac{1}{3} 2^{n( R - I_Q(U;V) - \epsilon - 2 \beta_2 )}} \quad \forall v^n. \label{typical set bound}
    \end{align}
    This use of the Chernoff bound has been used before for a soft-covering lemma in the proof of Lemma~9 of \cite{ahlswede-winter02}.
    
    At this point we will use the fact that ${\cal V}$ is a finite set to obtain two bounds.  First,
    \begin{align}
        D_{{\cal C},2}(v^n) &\leq \left( \max_{v \in {\cal V}} \frac{1}{Q_V(v)} \right)^n \quad \forall v^n \in {\cal V}^n \quad w. p. 1.
    \end{align}
    Notice that the maximum is only over the support of $V$, which makes this bound finite.  The reason this restriction is possible is because with probability one a conditional distribution is absolutely continuous with respect to its associated marginal distribution.
    
    Next we use the union bound applied to \eqref{atypical probability bound} and \eqref{typical set bound}, taking advantage of the fact that the space ${\cal V}^n$ is only exponentially large.  Let ${\cal S}$ be the set of codebooks such that all of the following are true:
    \begin{align}
    \int d P_{{\cal C},2} &< 2 \cdot 2^{-\beta_1 n}, \\
    D_{{\cal C}, 1}(v^n) &< 1 + 2^{-\beta_2 n} \quad \forall v^n \in {\cal V}^n, \\
    D_{{\cal C}, 2}(v^n) &< \left( \max_{v \in {\cal V}} \frac{1}{Q_V(v)} \right)^n \quad \forall v^n \in {\cal V}^n.
    \end{align}
    We see that the probability of not being in ${\cal S}$ is doubly exponentially small:
    \begin{align}
    \mathbf{P}({\cal C} \notin {\cal S}) &\leq e^{-\frac{1}{3} 2^{n (R - \beta_1)}} + |{\cal V}|^n e^{-\frac{1}{3} 2^{n(R - I_Q(U;V) - \epsilon - 2 \beta_2)}}.
    \end{align}
    
    What remains is to show that for every codebook in ${\cal S}$, the relative entropy is exponentially small.  We begin from \eqref{expanded divergence bound}.
    Since
    \begin{align}
    h(x) &\leq x \log \frac{e}{x},
    \end{align}
    we have
    \begin{align}
    h\left( \int d P_{{\cal C},1} \right) &= h\left( \int d P_{{\cal C},2} \right) \\
    &\leq 2 \cdot 2^{-\beta_1 n} (\beta_1 n \log 2 + \log e - \log 2).
    \end{align}
    Furthermore,
    \begin{align}
    \int d P_{{\cal C},1} \log D_{{\cal C},1} &\leq \int d P_{{\cal C},1} \log (1 + 2^{-\beta_2 n}) \\
    &\leq \log (1 + 2^{-\beta_2 n}) \\
    &\leq 2^{-\beta_2 n} \log e.
    \end{align}
    Finally,
    \begin{align}
    \int d P_{{\cal C},2} \log D_{{\cal C},2} &\leq \int d P_{{\cal C},2} \log \left( \max_{v \in {\cal V}} \frac{1}{Q_V(v)} \right)^n \\
    &\leq n \log \left( \max_{v \in {\cal V}} \frac{1}{Q_V(v)} \right) \int d P_{{\cal C},2} \\
    &\leq n \log \left( \max_{v \in {\cal V}} \frac{1}{Q_V(v)} \right) 2 \cdot 2^{-\beta_1 n}.
    \end{align}
\end{proof}

Note:  Relative entropy can be used to bound total variation via Pinsker's inequality.  With that approach you lose a factor of two in the exponent of decay.  On the other hand, the last steps of the proof can be modified to produce a total variation bound instead of relative entropy.  This direct method keeps the error exponents the same for the total variation case as it is for relative entropy.

\section{Applications}

This stronger version of Wyner's soft-covering lemma has important applications, particularly to information theoretic security.  The main advantage of this lemma comes from the union bound.

The usual random coding argument for information theory uses a randomly generated codebook until the final steps of the achievability proof.  In this final step, it is claimed that there exists a good codebook based on the analysis.  This can be done by analyzing the expected value of the performance for the random ensamble and claiming that at least one codebook is as good as the expected value.  Alternatively, one can make the argument based on the probability that the randomly generated codebook has a good performance.  If that probability is greater than zero, then there is at least one good codebook.  The second approach can be advantageous when performance is not captured by one scalar value that is easily analyzed---for example, if ``good'' performance involves a collection of constraints.

This stronger soft-covering lemma gives a very strong assurance that soft-covering will hold.  Even if the codebook needs to satisfy exponentially many constraints related to soft-covering, the union bound will yield the claim that a codebook exists which satisfies them all simultaneously.  Indeed, if you ran the soft-covering experiment exponentially many times, regardless of how the codebooks are correlated from one experiment to the next, the probability of seeing even one fail is still doubly-exponentially small.

\subsection{Semantic Security}

Wyner's soft-covering lemma has become a standard tool for proving that strong perfect secrecy is achieved in the wiretap channel (see e.g. \cite{bloch-laneman13}).  Coincidentally, Wyner introduced both the idea of soft covering \cite{wyner-common-info} and the wiretap channel \cite{wyner-wiretap} in the same year, but he didn't connect the two together.

According to the usual definition, strong perfect secrecy is achieved if the mutual information (unnormalized) between the message and the eavesdropper's channel output can be made arbitrarily small.

An even stronger notion of near-perfect secrecy is semantic security.  This requires that any two messages cannot be distinguished, usually measured by total variation.  This is not implied by the above strong secrecy because mutual information is an average quantity.  Since there are so many messages, the mutual information can be small even if a few of the messages are perfectly distinguishable.

Semantic security is an operationally relevant metric and widely adopted in cryptography.  In \cite{bellare-tessaro-vardy12} it is shown that semantic security is essentially equivalent to stipulating that the capacity of the channel from the transmitted message to the eavesdropper's observations is negligible, rather than the mutual information with respect to a uniformly distributed message.  They also show that for some binary channels semantic security can be achieved at rates up to Wyner's secrecy capacity.  Note that contrary to the claim in \cite{thangaraj14}, it is not sufficient to analyze the random codebook ensemble for an arbitrary message distribution in order to claim semantic security.  A single codebook must work well for all message distributions.

The soft-covering lemma is used in the proof of the wiretap channel in the following way.  A random codebook is used for communication to the intended receiver; however, two digital messages are concatenated and fed into the encoder (mapped to the codewords):  the actual message to be transmitted; and a random sequence of bits.  This random sequence of bits is what provides the secrecy.  Since the sequence is random, this means that for any individual transmitted message there is a collection of codewords from which one is selected uniformly at random and transmitted.  The soft-covering lemma says that the output at the eavesdropper will look i.i.d. if the size of this set if large enough.  More importantly, this i.i.d. output distribution does not depend on the message that was transmitted.

This argument, using the standard soft-covering lemma (expectation with respect to the codebook), is good enough to claim that the output distribution is close to the i.i.d. distribution on average over the messages.  This can then be used to claim that the mutual information is small.  However, for semantic security, it must be claimed that the output distribution is close the i.i.d. distribution for all messages, and there are exponentially many messages.  Here is where the stronger soft-covering lemma provided in this work is advantageous.  Using the stronger lemma we can claim that a single codebook exists that accomplishes this for every message.

For the single-transmitter wiretap setting, semantic security can be achieved by other means.  The expurgation technique that is used to bound the maximum error probability in channel coding can be used here.  Any offending messages, which do not produce the desired output distribution at the eavesdropper, can be removed from the codebook, and this can be shown to only negligibly reduce the message rate.  However, this expurgation technique will not work in all setting, such as the multiple access wiretap channel.  On the other hand, the proof method involving this stronger soft-covering lemma will work in that setting.  Thus, strong secrecy can be upgraded to semantic security even in situations where vanishing average error probability cannot be upgraded to vanishing maximum error probability.

\subsection{Distributed Channel Synthesis}

In previous work \cite{cuff13}, we characterized the minimum rates of communication and common randomness needed to synthesize a memoryless channel, where the channel inputs are observed at the location of the transmitter, and the channel outputs are produced at the location of the receiver.  This is referred to as distributed channel synthesis.  We say that synthesis is achieved if it is not possible to distinguish the synthetic channel from the genuine memoryless channel that it mimics upon observing the channel inputs and outputs.

The work in \cite{cuff13} only considers the case where the input is a fixed i.i.d. distribution.  A stronger claim would be to say that the synthetic channel cannot be distinguished from the genuine channel even for arbitrary inputs (perhaps with a statistical constraint).\footnote{This stronger claim was shown independently in the work of \cite{bennett-reverseShannon} using an entirely different proof.}  However, the proof in \cite{cuff13} relies heavily on the soft-covering lemma, and the exponential size of even a single type of input sequences made such a claim elusive. A single codebook would need to work well for all input sequences, but the soft-covering lemma only showed that it would work well on average.

With this stronger soft-covering lemma, it may be possible to use the union bound to claim that the soft-covering phenomenon will hold for all of the channel inputs simultaneously.

\subsection{Wiretap Channel II}

The wiretap channel has been studied in other forms aside from the memoryless channel setting.  One such variation, where the eavesdropper gets to make choices about his own channel noise, has been referred to as the Wiretap Channel II \cite{ozarow-wyner84}.  The original formulation was a channel where the eavesdropper is allowed to decide which transmission packets to observed while being limited in quantity.  If the selection of observed packets is an i.i.d. process, then this is the standard wiretap channel setting with an erasure channel to the eavesdropper.  The secrecy capacity of the wiretap channel type II, where the eavesdropper selects the packets to observe, was solved in \cite{ozarow-wyner84} only for the case of a noise-free channel to the legitimate receiver.  Recent work \cite{nafea-yener-wiretap2} investigates the case where the channel to the legitimate receiver is also noisy, for which the secrecy capacity is yet unknown.

The challenge in this setting is that the eavesdropper knows the codebook when it selects the packets to observe.  Therefore, secrecy will only be achieved if it is achieved uniformly for all selections of packets, of which there are exponentially many possibilities.

Using the lemma provided in this work, it can be shown that rates all the way up to the secrecy capacity of the memoryless erasure channel can be achieved even in this more stringent setting.  The codebook construction for the wiretap channel is symmetric in time, so the secrecy analysis, with respect to the random codebook, does not depend on the specific choice of packets observed.  The remaining step that is needed is to show that a single codebook exists which will provide secrecy simultaneously for each one of the exponentially many observation sequences.  This is what the stronger soft-covering lemma provides.

\section*{Acknowledgment}

This work was supported by the National Science Foundation (grant CCF-1350595) and the Air Force Office of Scientific Research (grant FA9550-15-1-0180).

\end{document}